\documentclass{l4dc2025}


\newtheorem{assumption}{Assumption}

\newcommand{\OD}{\mathsf{D}}
\newcommand{\OF}{\mathsf{F}}
\newcommand{\OG}{\mathsf{G}}

\newcommand{\real}{\mathbb{R}}

\title[Neural Network-assisted Interval Reachability]{Neural Network-assisted Interval Reachability  for Systems with  Control Barrier Function-Based Safe Controllers}
\usepackage{times}

\author{%
 \Name{Damola Ajeyemi}$^*$ \Email{dajeyemi@bu.edu}\\
 \addr Division of Systems Engineering, Boston University
\AND
 \Name{Saber Jafarpour}\thanks{These authors contributed equally.} \Email{saber.jafarpour@colorado.edu}\\
 \addr Department of Electrical, Computer, and Energy Engineering, University of Colorado Boulder %
\AND
 \Name{Emiliano Dall'Anese} \Email{edallane@bu.edu}\\
 \addr Department of Electrical and Computer Engineering and Division of Systems Engineering, Boston University%
 }

\begin{document}

\maketitle

\begin{abstract}
Control Barrier Functions (CBFs) have been widely utilized in the design of optimization-based controllers and filters for dynamical systems to ensure forward invariance of a given set of safe states. 
While CBF-based controllers offer safety guarantees, they can compromise the performance of the system, leading to undesirable behaviors such as unbounded trajectories and emergence of locally stable spurious equilibria. 
Computing reachable sets for systems with CBF-based controllers is an effective approach for runtime performance and stability verification, and can potentially serve as a tool for trajectory re-planning. 
In this paper, we propose a computationally efficient interval reachability method for performance verification of systems with optimization-based controllers by: (i) approximating the optimization-based controller by a pre-trained neural network to avoid solving optimization problems repeatedly, and (ii) using mixed monotone theory to construct an embedding system that leverages state-of-the-art neural network verification algorithms for bounding the output of the neural network. Results in terms of closeness of solutions of trajectories of the system with the optimization-based controller and the neural network are derived. Using a single trajectory of the embedding system along with our closeness of solutions result, we obtain an over-approximation of the reachable set of the system with optimization-based controllers. Numerical results are presented to corroborate the technical findings. 
\end{abstract}

\begin{keywords}%
  Optimization-based control; safety filters,  reachability analysis; neural networks.     
\end{keywords}

\section{Introduction}
\label{sec:introduction}

Control Barrier Functions (CBFs) have emerged as a powerful tool for the design of optimization-based control methods for safety-critical systems. Given a desired set of \emph{safe} states, CBF-based constraints
can be embedded into the optimization problem defining the control law to enforce forward invariance of the safe set (see, e.g., \cite{PW-FA:07,ADA-SC-ME-GN-KS-PT:19,xiao2023safe,garg2024advances}). CBFs have been widely applied in the context of safety filters, where a nominal controller is minimally modified to satisfy CBF constraints. They have also been used alongside control Lyapunov functions to achieve both safety and stability (\cite{ames2014control,ong2019universal}), and to address scenarios involving high-relative-degree constraints (\cite{xiao2019control}). The optimization problem defining the controller can also include input constraints   (\cite{agrawal2021safe,chen2024equilibria,cortez2021robust}).

While CBF-based controllers and safety filters ensure forward invariance of the safe set, they can impact the performance of the system by introducing undesirable behaviors. For instance, it is well-known that designing CBF filters for systems with stabilizing nominal controllers can lead to emergence of \emph{undesirable} equilibrium points or unbounded trajectories. Moreover, some of these undesirable equilibria may even be locally stable (see, e.g.,~\cite{MFR-APA-PT:21,WSC-DVD:22-tac,tan2024undesired,chen2024characterization}), and their stability properties cannot be changed by simply changing the CBF (\cite{chen2024equilibria}).

 Given these potential undesirable behaviors, this paper proposes using feed-forward neural network (FNN) to estimate the optimization-based controllers and leveraging reachability analysis for their performance and stability verification in real time. Continuously estimating the reachable sets during the operation of the system has several benefits: \emph{(i)} it provides predictive information on the system's evolution and performance, identifying potential undesirable behaviors such as convergence to undesirable equilibria or loss of controllability~\citep{MFR-APA-PT:21,garg2024advances};  \emph{(ii)} it informs high-level trajectory (re)planners (\cite{matni2024quantitative}), adaptation strategies (\cite{black2023adaptation}), or reach-avoid strategies (\cite{fisac2015reach,landry2018reach}). 

Real-time safety \emph{and} performance verification for optimization-based controllers is a challenging problem (\cite{garg2024advances}) due to several factors: (a) the existing methods are restrictive and often not applicable to the cases where the output of the optimization-based controller cannot be computed in closed form; (b) generating trajectories of these systems can be computationally intractable as one needs to solve a continuum of optimization problems.

\noindent \textbf{Reachability and CBF-based filters}. In the context of safety filters, reachability analysis has been used to refine a CBF that is safe and not overly 
conservative in, e.g.,~\cite{tonkens2022refining}; it has also been used in \cite{tonkens2023patching} with similar offline design purposes in the context of safe value functions. In both cases, Hamilton-Jacobi (HJ) reachability methods were used (\cite{wabersich2023data}). HJ reachability was also used in~\cite{choi2021robust} and~\cite{gong2022constructing} to  construct a control Lyapunov value function to stabilize a point of interest, and in \cite{kumar2023fast} to construct an implicit CBF through HJ reach-avoid differential
dynamic programming. Despite offering high accuracy in estimating reachable sets, HJ-based approaches do not scale well with the size of the system and can become computationally heavy for real-time implementation. Other reachability approaches have been used in~\cite{abate2020enforcing,srinivasan2020continuous} and ~\cite{llanes2022safety} for safety assurance using CBFs. 

\noindent \textbf{Reachability of FNN-controlled systems.}
Reachability of systems with neural network controllers have been studied extensively in the literature. Existing approaches for linear systems include NNV~\citep{HDT-etal:20}, and simulation-guided interval analysis~\citep{WX-HDT-XY-TTJ:21}, ReachLP~\citep{ME-GH-CS-JPH:21}, Reach-SDP~\citep{HH-MF-MM-GJP:20}.
For nonlinear systems, the existing approaches include ReachNN~\citep{CH-JF-WL-XC-QZ:19}, Sherlock~\citep{SD-XC-SS:19}, Verisig 2.0~\citep{RI-TC-JW-RA-GP-IL:21}, POLAR~\citep{CH-JF-XC-WL-QZ:22}, JuliaReach~\citep{CS-MF-SG:22}, and mixed integer programming in~\citep{CS-AM-AI-MJK:22}

\noindent \textbf{Contributions}. We propose a computationally efficient framework for stability and performance verification in systems with optimization-based controllers using reachability analysis. Our approach is based on two main ingredients. First, we approximate the optimal solution map of the optimization problem with a feedforward neural network (FNN) and leverage interval analysis and mixed monotonicity theory to over-approximate the reachable sets of the FNN-approximated system. Second, we present a novel result on the \emph{closeness of trajectories} between the system with the optimization-based controller and the system with the FNN approximation. This result enables us to over-approximate the reachable sets of the original system with the optimization-based controller. 
For the reachability analysis of the approximated system with the FNN, we construct an embedding system by integrating state-of-the-art neural network verification methods with suitable inclusion functions for the open-loop system. This construction of the embedding system is inspired by \cite{jafarpour2023interval}, where interval reachability for safety verification of neural network controlled system was investigated. In this setting, a single trajectory of the embedding system provides a hyper-rectangular over-approximation of the reachable sets for the system with the FNN approximation. We then combine these reachable set estimates with our closeness of trajectories result to obtain an over-approximation of the reachable sets for the original system with optimization-based controller. This over-approximation is expressed as the Minkowski sum of a hyper-rectangle and a ball (defined using an arbitrary norm). 
Our strategy eliminates the need to repeatedly solve optimization problems to compute the controller’s input. Instead, we compute a hyper-rectangular over-approximation of the reachable set for the FNN-approximated system using a single trajectory of the embedding system. Through numerical experiments, we demonstrate that the proposed method is computationally efficient, enabling estimation of multiple reachable sets for the system starting from multiple initial hyper-rectangles, all within a short computation time.

\section{Preliminaries and Problem Statement}

\subsection{Notation and Definitions}

 We denote the set of real numbers, non-negative real numbers, and natural numbers by $\mathbb{R}$, $\mathbb{R}_{\geq 0}$, and $\mathbb{N}$, respectively. Vectors are represented using  lowercase letters $x \in \mathbb{R}^n$, $x_i$ is the $i$th entry of $x$, while matrices use uppercase letters (\emph{e.g.}, $A \in \mathbb{R}^{n \times m}$). Given two sets $\mathcal{X},\mathcal{Y}\in \real^n$, we define the Minkowski sum of the sets $\mathcal{X}$ and $\mathcal{Y}$ by $\mathcal{X}\oplus \mathcal{Y} = \{x+y\mid x\in \mathcal{X},\;\; y\in \mathcal{Y}\}$. Given a norm $\|\cdot\|: \mathbb{R}^n \to \mathbb{R}_{\geq 0}$, the induced norm of a matrix $A$ is denoted by $\|A\|$, and 
 the associated norm ball with radius $r$ centered at $z\in \real^n$ is denoted by $\mathcal{B}_{\|\cdot\|}(r,z)=\{x\in \real^n\mid \|x-z\|\le r\}$. Given a positive definite matrix $P \in \mathbb{R}^{n \times n}$, the weighted $\ell_2$ norm is defined as $\|x\|_P = \sqrt{x^\top P x}$. For a scalar function $g: \mathbb{R}^n \rightarrow \mathbb{R}$, its gradient is denoted by $\nabla g(x)$, and its Hessian matrix by $\nabla^2 g(x)$. For a vector-valued function $g: \mathbb{R}^n \rightarrow \mathbb{R}^m$, the Jacobian matrix is represented by $\frac{\partial g(x)}{\partial x}$.
 A continuous function $\alpha: [0, a) \rightarrow \mathbb{R}_{\geq 0}$ is of class $\mathcal{K}$ if $\alpha(0) = 0$ and it is strictly increasing. A function belongs to class $\mathcal{K}_\infty$ if $a = \infty$ and $\alpha(s) \rightarrow \infty$ as $s \rightarrow \infty$. A continuous function $\alpha: \mathbb{R} \rightarrow \mathbb{R}$ is an extended class-$\mathcal{K}$ function if $\alpha(0) = 0$, it is strictly increasing, and $\alpha(s) \rightarrow \infty$ as $s \rightarrow \infty$. Given a norm $\|\cdot\|: \mathbb{R} \to \mathbb{R}_{\geq 0}$ on $\mathbb{R}^n$ and a continuous map $F: \mathbb{R}^n \to \mathbb{R}$, we denote as $\mathrm{osLip}(f)$ the (minimal) one-sided Lipschitz constant of $f$~\cite[Sec.~3]{FB-CTDS}. Given the vectors $x, z \in \real^n$, $(x, z) \in \real^{2n}$ is their vector  concatenation; i.e., $(x, z) = [x^\top, z^\top]^\top$.  The partial order $\leq$ on $\real^n$ if defined as $x \leq z$ is and only if $x_i \leq z_i$ for all $i = 1, \ldots, n$. For any $x, z \in \real^n$, we define the interval $[x, z] := \{w \in \real^n: x \leq w \leq z \}$. 
For $z, w \in \real^n$ and every $i \in \{1, \ldots, n\}$, we define the vector $z_{[i:w]}$ with entries $z_{[i:w],i} = z_j$ if $j \neq i$ and $z_{[i:w],i} = w_j$ if $j = i$. Finally, given a matrix $A \in \real^{n \times n}$,  we denote the non-negative part of $A$ as $[A]^+ := \max\{A,0\}$   and the non-positive part of $A$ by $[A]^- := \min\{A,0\}$, where the max and min are taken entry-wise.

\vspace{-0.3cm}

\subsection{Main setup and problem statement}

We consider a control-affine dynamical system of the form:

\begin{equation}
\label{eq:control-affine-sys}
    \dot{x} = f(x) + g(x) u,
\end{equation}
where \( x \in \mathbb{R}^n \) is state of the system, \( u \in \mathbb{R}^m \) is the control input, and the functions \( f : \mathbb{R}^n \to \mathbb{R}^n \) and \( g : \mathbb{R}^n \to \mathbb{R}^{n \times m} \) are continuously differentiable. Suppose that a locally-Lipschitz nominal controller $\kappa: \mathbb{R}^n \to \mathbb{R}^m$ is designed so that the system 
$\dot x = \tilde{f}(x) := f(x)+g(x)\kappa(x)$ has a unique equilibrium, and it renders this equilibrium globally asymptotically stable. In the remainder of the paper, we assume without loss of generality that such equilibrium is the origin. We consider the case where it is desirable for the system to operate within a given \emph{safe} set; to this end, in the following we introduce the notion of CBF (see, e.g.,~\cite{ADA-SC-ME-GN-KS-PT:19,xiao2023safe}).

\vspace{.2cm}

\textbf{Definition~1 (CBF)} Let $\mathcal{S}_i\subset\real^n$ be a subset of $\real^n$.
  Let $h_i:\real^{n}\to\real$ be a continuously differentiable function
  such that $\mathcal{S}_i = \{x\in\real^n: h_i(x)\geq0\}$ and $\partial \mathcal{S}_i = \{x\in\real^n: h_i(x)=0\}$.
  The function $h_i$ is a CBF of $\mathcal{S}_i$ for the
  system~\eqref{eq:control-affine-sys} if there exists an extended class
  $\mathcal{K}_{\infty}$ function $\alpha_i$ such that, for all $x \in \mathcal{S}_i$,  $\exists ~ u \in\real^{m}$ satisfying $\nabla h_i(x)^\top (f(x)+g(x)u) + \alpha_i(h_i(x)) \geq 0$. \hfill $\Box$

\vspace{.2cm}
Given $N$ continuously differentiable functions $h_i:\real^{n}\to\real$ defining the sets $\{\mathcal{S}_i\}_{i = 1}^N$ as in Def.~1, we define the safe set for the system~\eqref{eq:control-affine-sys} as $\mathcal{S} := \bigcap_{i=1}^{N} \mathcal{S}_i$. With this definition of CBF and safe set, hereafter we refer to the \emph{filtered} system as:
\begin{equation}\label{eq:general-system-1}
\dot{x}= \tilde{f}(x) +g(x)v(x),
\end{equation}
where the map $x \mapsto v(x)$ is defined as:
\begin{subequations}
\label{eq:optim_filter}
\begin{align}
    v(x) := \arg & \min_{\theta \in \mathbb{R}^m} \frac{1}{2} \| \theta \|^2, \label{eq:filter_cost} \\
    &  \text{s.to:~}  \ell_i(\theta,x) \leq 0 , \quad i = 1, \ldots, L, \label{eq:filter_constraint} \\
    & ~~~~~~~ \nabla h_i(x) ( \tilde{f}(x) + g(x)\theta ) + \alpha_i(h_i(x)) \geq 0,  ~~~ \forall \, i = 1, \ldots N \label{eq:filter_safetyconstraint}
\end{align}
\end{subequations}
with the constraints~\eqref{eq:filter_safetyconstraint} embed the CBFs of the sets $\{\mathcal{S}_i\}_{i = 1}^N$ while the constraints $\ell_i(\theta,x) \leq 0$ specific additional performance and operational requirements; for example,~\eqref{eq:filter_constraint} may include Control Lyapunov Function (CLF) constraints, input constraints, etc. We make the following assumptions. 

\begin{assumption}
\label{as:origin}
For each $i = 1, \ldots, N$, $\nabla h_i(x) \neq 0$ for all $x \in \partial \mathcal{S}_i$. The set  $\mathcal{S} = \bigcap_{i=1}^{N} \mathcal{S}_i$ is non-empty, and the origin is in its interior.  \hfill $\Box$
\end{assumption}

\begin{assumption}
\label{as:regularity}
For any given $x \in \real^n$, each of the functions $\ell_i(\theta,x)$ is convex and continuously differentiable in $\theta$. Additionally, for any given $x \in \real^n$, problem~\eqref{eq:optim_filter} is feasible, and it satisfies the Mangasarian-Fromovitz constraint qualification and the constant-rank condition.      \hfill $\Box$
\end{assumption}

In this setup,~\eqref{eq:optim_filter} is a parametric convex program with a strongly convex cost; hence, it has a unique (globally) optimal solution for any given $x$. If, additionally,  the functions $\ell_i(\theta,x)$ are linear in $\theta$, then~\eqref{eq:optim_filter} is a linearly-constrained quadratic program (QP). Assumption~\ref{as:regularity} ensures that the map $x \mapsto v(x)$  is locally Lipschitz~\cite[Theorem 3.6]{liu1995sensitivity}; see also the conditions in~\cite{chen2024equilibria} for safety filters designed based on only one obstacle. This, in turn, ensures existence and uniqueness of solutions to~\eqref{eq:general-system-1}. Additionally, when Assumptions~\ref{as:origin} and~\ref{as:regularity} hold, it follows from~\cite[Theorem~2]{ADA-SC-ME-GN-KS-PT:19} that the filtered system~\eqref{eq:general-system-1} renders the set $\mathcal{S}$ \emph{forward
invariant}.  

Next, we introduce the notion of \emph{undesirable} equilibrium (also referred to as spurious in, e.g.,~\cite{ MFR-APA-PT:21}). We say that a point $x_{\texttt{unde}}^* \in \real^n$ is an undesirable equilibrium if $\tilde{f}(x_{\texttt{unde}}^*) +g(x_{\texttt{unde}}^*)v(x_{\texttt{unde}}^*) = 0$, and $\tilde{f}(x_{\texttt{unde}}^*) \neq 0$; that is, $x_{\texttt{unde}}^*$ us an equilibrium of the filtered system~\eqref{eq:general-system-1} but not of the system under the nominal controller  $\dot x = \tilde{f}(x)$. We recall that, without loss of generality, the only equilibrium of the  system under the nominal controller is the origin.  

Let $\phi(t,x_0)$ denote the state of the  dynamical system~\eqref{eq:general-system-1} at time $t \geq 0$, when starting from the state $x(0) = x_0 \in \real^n$. Then, given a set of initial states $\mathcal{X}_0 \subset \real^n$, we define the \emph{reachable set} of \eqref{eq:general-system-1} at time $t$ when starting from $\mathcal{X}_0$ as $\mathcal{R}_{\texttt{fs}}(t,\mathcal{X}_0) := \left\{\phi(t,x_0): x_0 \in \mathcal{X}_0  \right\}$, with $t \geq 0$. Since our systems are time-invariant, the notion of reachable set extends to states that are reachable at $t_1  \geq t_0$, when starting from a set of points at time $t_0 \geq 0$. 

As explained in Section~\ref{sec:introduction}, CBF-based controllers 
can introduce undesirable behaviors, including the emergence of spurious equilibria. Continuously estimating the reachable set  of~\eqref{eq:general-system-1} can provide benefits during the \emph{real-time} system operation, including identifying in real time potential undesirable behaviors. Given this, the problem addressed in this paper can be stated as follows.  

\vspace{.2cm}

\textbf{Problem~1}. For the system~\eqref{eq:general-system-1}, develop a computationally efficient method to estimate a reachable set  $\mathcal{R}_{\texttt{fs}}(t,\mathcal{X})$ (or multiple reachable sets  $\mathcal{R}_{\texttt{fs}}(t,\mathcal{X}_i)$, $i = 1, \ldots, N_s$). \hfill $\Box$ 

\vspace{.2cm}

In particular, we are interested in methods with low computational complexity, so that reachable sets $\mathcal{R}_{\texttt{fs}}(t,\mathcal{X})$ can be estimated efficiently and they can be continuously updated. 

\vspace{-0.3cm}
\section{Reachability via Neural Networks and Mixed-monotonicity}
\label{sec:reachability}

\subsection{Main framework}

We address Problem~1 by proposing a framework to estimate the reachable sets for the filtered system~\eqref{eq:general-system-1} by: \emph{(i)}  approximating  $v(x)$ using a pre-trained FNN; \emph{(ii)} leveraging mixed monotonicity to efficiently compute approximations of reachable set of the system when $v(x)$ is approximated by  the FNN; \emph{(iii)} leveraging closeness of trajectory results to compute over-approximations of $\mathcal{R}_{\texttt{fs}}(t,\mathcal{X}_0)$. 

\textbf{Neural network approximation}. We approximate $v(x)$ using a FNN $\mathcal{N}(x)$, trained offline to replicate optimal solutions to  \eqref{eq:optim_filter}. Here, $\mathcal{N}(x)$ is structured as an \( H \)-layer FNN, defined as:
\begin{subequations}
    \label{eq:NN}
\begin{align}
u & = \mathcal{N}(x) := W^{(H)} \varphi^{(H)} +   b^{(H)} \\
    \varphi^{(i)} &= \Phi^{(i)} \left( W^{(i-1)} \varphi^{(i-1)} + b^{(i-1)} \right), \quad i = 1, \dots, H,  ~~ \text{and}~~\varphi^{(0)}  = x \end{align}
\end{subequations}
where \( W^{(i-1)} \in \mathbb{R}^{n_i \times n_{i-1}} \) and \( b^{(i-1)} \in \mathbb{R}^{n_i} \) are the weight matrix and bias vector of the $i$th layer of the  network, $n_i$ is the number of neurons in the $i$th layer, $\varphi^{(i)} \in \real^{n_i}$ is the $i$th hidden variable, and $\Phi^{(i)}: \real^{n_i} \to \real^{n_i}$ is the Lipschitz-continuous  diagonal  activation function of the $i$th layer. We assume that each element satisfies the inequalities $0\leq \frac{(\Phi^{(i)}_\ell(x) - \Phi^{(i)}_\ell(y))}{x - y} \leq 1$, for any pair $x, y$, and for $\ell = 1, \ldots, n_i$. We note that activation functions such as the ReLU, leaky ReLU, sigmoid, and tanh, all satisfy this condition. The dynamics under the FNN approximation of the controller  are then given by:
\begin{equation}
\label{eq:general-system-N}
    \dot{x}_\texttt{nn} = \tilde{f}(x_\texttt{nn}) + g(x_\texttt{nn}) \mathcal{N}(x_\texttt{nn})
\end{equation}
where we recall that $\tilde{f}(x_\texttt{nn}) = f(x_\texttt{nn}) + g(x_\texttt{nn})\kappa(x_\texttt{nn})$, and where the subscript $\texttt{nn}$ is used to emphasize that trajectories are generated using an approximation of $v(x)$. Hereafter, similarly to~\eqref{eq:general-system-1}, we let $\phi_{\texttt{nn}}(t,x_0)$ denote the state of the dynamical system~\eqref{eq:general-system-N} at time $t \geq 0$, when starting from the state $x(0) =x_0 \in \real^n$. Additionally, we define the reachable set of \eqref{eq:general-system-N} at time $t$ when starting from $\mathcal{X}_0$ as $\mathcal{R}_{\texttt{nn}}(t,\mathcal{X}_0) := \left\{\phi_{\texttt{nn}}(t,x_0: x_0 \in \mathcal{X}_0  \right\}$, with $t \geq 0$.

\textbf{Embedding system via inclusion functions}. In the next step, we review the framework proposed in~\citep{SJ-AH-SC:24} for interval reachability of the approximate system~\eqref{eq:general-system-N}. The main idea is to embed the neural network controlled system~\eqref{eq:general-system-N} into a larger dimensional space~\citep[Theorem 3]{SJ-AH-SC:24} and use a single trajectory of this embedding system to over-approximate reachable sets of the original system~\eqref{eq:general-system-N}. For a given function $d:\real^n\to \real^m$ and an interval $[\underline{y},\overline{y}]$, we introduce the notion of inclusion function of $d$ on interval $[\underline{y},\overline{y}]$ as the map $\OD_{[\underline{y},\overline{y}]} = \left[\begin{smallmatrix}
    \underline{\OD}_{[\underline{y},\overline{y}]}\\ \overline{\OD}_{[\underline{y},\overline{y}]}
\end{smallmatrix}\right]: \real^{2n}\to \real^{2m}$ satisfying~\citep{LJ-MK-OD-EW:01} 
\begin{align*}
    \underline{\OD}_{[\underline{y},\overline{y}]}(\underline{x},\overline{x}) \le d(x) \le \overline{\OD}_{[\underline{y},\overline{y}]}(\underline{x},\overline{x}),\qquad\mbox{for all }x\in [\underline{x},\overline{x}]\subseteq [\underline{y},\overline{y}]. 
\end{align*}
For a given function $d:\real^{n}\to \real^m$, there exists various methods to compute its inclusion function over a given interval including natural compositional approach~\citep{LJ-MK-OD-EW:01}, Jacobian-based methods, and decomposition-based methods~\citep{SC-MA:15b}. We refer to~\citep{SJ-AH-SC:24} for more details on construction of inclusion functions. 
For a FNN $\mathcal{N}:\real^n\to \real^m$, one can alternatively obtain the inclusion function for $\mathcal{N}(x)$ using existing neural network verification algorithms including CROWN~\citep{HZ-etal:18}, LipSDP \citep{MF-MM-GJP:22}, and IBP \citep{SG-etal:19}. In particular, some neural network verification algorithms can provide \textit{affine} $[y,\widehat{y}]$-localized inclusion functions for $\mathcal{N}$; examples include CROWN and its variants~\citep{HZ-etal:18}. Given an interval $[y,\widehat{y}]$, these algorithms provide a tuple $(C,\underline{d},\overline{d})$ defining affine upper and lower bounds for the output of the neural network
\begin{gather} \label{eq:crown}
   C(y,\widehat{y})x + \underline{d}(y,\widehat{y}) \leq \mathcal{N}(x) \leq C(y,\widehat{y})x + \overline{d}(y,\widehat{y}),
\end{gather}
for every $x\in [y,\widehat{y}]$. Consider the FFN-controlled system~\eqref{eq:general-system-N} with an inclusion function $\tilde{\OF} = \left[\begin{smallmatrix}
    \tilde{\underline{\OF}}\\
    \tilde{\overline{\OF}}
\end{smallmatrix}\right]:\real^{2n}\to \real^{2n}$ for the vector field $\tilde{f}$ and an inclusion function $\OG = \left[\begin{smallmatrix}
   \underline{\OG}\\
   \overline{\OG}
\end{smallmatrix}\right]:\real^{2n}\to \real^{2m}$ for the map $g$. We can construct the embedding system associated to the FNN-controlled system~\eqref{eq:general-system-N} as follows: 
\begin{align}\label{eq:embeding}
    \dot{\overline{x}}_i &= \tilde{\underline{\OF}}(\underline{x},\overline{x}_{[i:\underline{x}]}) + [\overline{\OG}(\underline{x},\overline{x})]^{+}(C_{[\underline{x},\overline{x}]}x + \underline{d}_{[\underline{x},\overline{x}]}) + [\overline{\OG}(\underline{x},\overline{x})]^{-}(C_{[\underline{x},\overline{x}]}x + \overline{d}_{[\underline{x},\overline{x}]}) \nonumber\\ 
    \dot{\overline{x}}_i &= \tilde{\overline{\OF}}(\underline{x}_{[i:\overline{x}]},\overline{x}) + [\overline{\OG}(\underline{x},\overline{x})]^{-}(C_{[\underline{x},\overline{x}]}x + \underline{d}_{[\underline{x},\overline{x}]}) + [\overline{\OG}(\underline{x},\overline{x})]^{+}(C_{[\underline{x},\overline{x}]}x + \overline{d}_{[\underline{x},\overline{x}]}) \, .
\end{align}
Let $\mathcal{X}_0= [\underline{x}_0,\overline{x}_0]$ and $t\mapsto \left[\begin{smallmatrix}
    \underline{x}(t)\\ \overline{x}(t)
\end{smallmatrix}\right]$ be a trajectory of the embedding system~\eqref{eq:embeding} starting from $\left[\begin{smallmatrix}
    \underline{x}(0)\\ \overline{x}(0)
\end{smallmatrix}\right] = \left[\begin{smallmatrix}
    \underline{x}_0\\ \overline{x}_0
\end{smallmatrix}\right]$. Then, by~\cite[Theorem 4(i)]{SJ-AH-SC:24}, it follows that for every $t\ge 0$, 
\begin{align}\label{eq:interval-over}
    \mathcal{R}_{\texttt{nn}}(t,\mathcal{X}_0) \subseteq [\underline{x}(t),\overline{x}(t)]. 
\end{align}

\subsection{Closeness of solutions and over-approximation of the reachable sets}

In this section, we obtain bounds on the distance between the trajectories of the filtered system~\eqref{eq:general-system-1} and its approximation using the FNN-controlled system~\eqref{eq:general-system-N}. We begin with the following assumption on the accuracy of the FNN $x\mapsto \mathcal{N}(x)$ in approximating the map $x\mapsto v(x)$ defined in~\eqref{eq:optim_filter}. 

\begin{assumption}[FNN approximation error]\label{eq:nnclose} For every norm $\|\cdot\|_{\mathcal{U}}$ on $\real^m$ and every compact set $\mathcal{C}\subseteq \mathbb{R}^n$, there exists a constant $M_{\mathcal{C}}\ge 0$ such that $\|\mathcal{N}(x)-v(x)\|_{\mathcal{U}}\le M_{\mathcal{C}}$, for every $x\in \mathcal{C}$. \hfill $\Box$
\end{assumption}

This assumption is grounded on the universal approximation capabilities of neural networks on compact sets (see the seminal works of~\cite{hornik1991approximation, barron1992neural} and some recent results in, for example,~\cite{duan2023minimum, wang2024universal}). One can use Assumption~\ref{eq:nnclose} and the notion of incremental state-to-input bounds to obtain upper bounds on the distance between the trajectories of the filtered system~\eqref{eq:general-system-1} and their associated trajectories of the FNN-controlled system~\eqref{eq:general-system-N}. 

\begin{theorem}[Closedness of solutions]\label{thm:closed}
    Consider the filtered system~\eqref{eq:general-system-1} and its approximation~\eqref{eq:general-system-N} with the FNN $\mathcal{N}(x)$ satisfying Assumption~\ref{eq:nnclose}. Suppose that $t\mapsto \left[\begin{smallmatrix}
        \underline{x}_{\texttt{nn}}(t)\\ \overline{x}_{\texttt{nn}}(t)
    \end{smallmatrix}\right]$ is a curve such that $\mathcal{R}_{\texttt{nn}}(t,\mathcal{X}_0)\subseteq [\underline{x}_{\texttt{nn}}(t),\overline{x}_{\texttt{nn}}(t)]$, $\forall ~ t\ge 0$, $\mathcal{X}_0= [\underline{x}_0,\overline{x}_0]$. Let $t\mapsto x(t)$ be a trajectory of the filtered system~\eqref{eq:general-system-1} and $t\mapsto x_{\texttt{nn}}(t)$ be associated the trajectory of the neural network controlled system~\eqref{eq:general-system-N}, for $x(0)=x_{\texttt{nn}}(0)$, and $\mathcal{C}\subseteq \real^n$ be a compact set that contains both trajectories.  Then, $\forall ~ t\ge 0$,
    \begin{align}
    \label{eq:closeness}
        \|x(t)-x_{\texttt{nn}}(t)\| \le \ell M_{\mathcal{C}} \exp\left(\int_{0}^{t} c(\tau)d\tau\right) \int_{0}^{t} \exp\left(-\int_{0}^{\tau}c(s)ds\right)d\tau
    \end{align}
    where $\ell = \sup_{x\in \mathcal{C}} \|g(x)\|_i$  and $\|\cdot\|_i$ is induced norm from $\|\cdot\|$ and $\|\cdot\|_{\mathcal{U}}$ and $t\mapsto c(t)$ is a curve satisfying 
    \begin{align}\label{eq:condition}
        \mathrm{osLip}\left(\tilde{f}(x) + C_{[\underline{x}_{\texttt{nn}}(t),\overline{x}_{\texttt{nn}}(t)]} x\right) \le c(t),\qquad\mbox{ for every }x\in \mathcal{C}, \;\; t\ge 0 \, .
    \end{align}
\end{theorem}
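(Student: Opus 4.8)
The plan is to control the error signal $e(t) := x(t) - x_{\texttt{nn}}(t)$ through a scalar linear differential inequality and then integrate that inequality with the comparison lemma. Since $x(0) = x_{\texttt{nn}}(0)$ we start from $e(0) = 0$. Differentiating $e$ and adding and subtracting the closed-loop field of~\eqref{eq:general-system-N} evaluated at $x(t)$, I would write
\begin{align*}
\dot{e}(t) = \big[\tilde{f}(x) + g(x)\mathcal{N}(x) - \tilde{f}(x_{\texttt{nn}}) - g(x_{\texttt{nn}})\mathcal{N}(x_{\texttt{nn}})\big] + g(x)\big(v(x) - \mathcal{N}(x)\big),
\end{align*}
where each function is evaluated along its respective trajectory. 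The first bracket is the increment of the FNN-controlled vector field $\tilde{f} + g\,\mathcal{N}$ between the two trajectories, while the second term is the perturbation caused purely by the FNN approximation error $v(x) - \mathcal{N}(x)$.

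The perturbation term is handled directly by Assumption~\ref{eq:nnclose}: with the induced norm $\|\cdot\|_i$ and $\ell = \sup_{x\in\mathcal{C}}\|g(x)\|_i$, I would bound $\|g(x)(v(x)-\mathcal{N}(x))\| \le \ell\,\|v(x)-\mathcal{N}(x)\|_{\mathcal{U}} \le \ell M_{\mathcal{C}}$, which is legitimate because both trajectories lie in the compact set $\mathcal{C}$. For the first bracket I would pass to the upper Dini derivative $D^+\|e(t)\|$ and invoke the standard one-sided Lipschitz (logarithmic-norm) estimate, obtaining $D^+\|e(t)\| \le \mathrm{osLip}\big(\tilde{f} + g\,\mathcal{N}\big)\,\|e(t)\| + \ell M_{\mathcal{C}}$ on $\mathcal{C}$. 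The key point is that the per-interval affine bounds~\eqref{eq:crown} produced by CROWN allow me to replace the intractable one-sided Lipschitz constant of the genuine closed-loop field by the tractable surrogate field $\tilde{f}(x) + C_{[\underline{x}_{\texttt{nn}}(t),\overline{x}_{\texttt{nn}}(t)]}x$ appearing in~\eqref{eq:condition}; consequently any curve $c(t)$ satisfying~\eqref{eq:condition} is a valid upper bound and yields $D^+\|e(t)\| \le c(t)\|e(t)\| + \ell M_{\mathcal{C}}$.

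Finally, I would apply the comparison lemma to the scalar initial value problem $\dot{\eta}(t) = c(t)\eta(t) + \ell M_{\mathcal{C}}$ with $\eta(0)=0$, whose explicit solution is exactly $\eta(t) = \ell M_{\mathcal{C}}\exp\!\left(\int_0^t c(\tau)d\tau\right)\int_0^t \exp\!\left(-\int_0^\tau c(s)ds\right)d\tau$. Since $\|e(0)\| = 0 = \eta(0)$ and $\|e(\cdot)\|$ obeys the same differential inequality, monotonicity of the comparison system gives $\|e(t)\| \le \eta(t)$ for all $t \ge 0$, which is precisely~\eqref{eq:closeness}.

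The step I expect to be the main obstacle is the replacement carried out in the second paragraph: making rigorous how the per-interval affine enclosure~\eqref{eq:crown} of $\mathcal{N}$ translates, through the state-dependent matrix $g$, into a one-sided Lipschitz bound on the true nonlinear field, and verifying that the Dini-derivative one-sided Lipschitz inequality remains valid at the (measure-zero) non-differentiable points of $t \mapsto \|e(t)\|$. A secondary but necessary technical point is establishing a priori that both trajectories stay within a common compact $\mathcal{C}$ so that $\ell$ and $M_{\mathcal{C}}$ are finite, and that the inclusion $\mathcal{R}_{\texttt{nn}}(t,\mathcal{X}_0)\subseteq[\underline{x}_{\texttt{nn}}(t),\overline{x}_{\texttt{nn}}(t)]$ keeps $x_{\texttt{nn}}(t)$ inside the CROWN localization interval, which is what justifies evaluating $C$ at $[\underline{x}_{\texttt{nn}}(t),\overline{x}_{\texttt{nn}}(t)]$ along the whole trajectory.
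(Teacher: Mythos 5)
Your proposal is correct and follows essentially the same route as the paper's proof: both decompose the filtered dynamics as the FNN closed-loop field plus the perturbation $g(x)\left(v(x)-\mathcal{N}(x)\right)$, bound that perturbation by $\ell M_{\mathcal{C}}$ via Assumption~\ref{eq:nnclose}, establish the Dini-derivative inequality $D^{+}\|x(t)-x_{\texttt{nn}}(t)\| \le c(t)\|x(t)-x_{\texttt{nn}}(t)\| + \ell M_{\mathcal{C}}$, and integrate it (your comparison-lemma step is exactly the paper's integrating-factor computation with $\xi(t)=\exp\left(\int_0^t c(s)ds\right)$). The one step you defer as the ``main obstacle'' --- passing from the CROWN enclosure to the surrogate one-sided Lipschitz condition~\eqref{eq:condition} --- is handled in the paper by the simple device of writing $\mathcal{N}(x) = C_{[\underline{x}_{\texttt{nn}}(t),\overline{x}_{\texttt{nn}}(t)]}x + d(t)$ with $d(t)$ in the CROWN offset interval, so that the residual is treated as a bounded additive offset lumped into the input channel, which does not alter the one-sided Lipschitz constant of the surrogate field.
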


\begin{proof}
    We first note that, for every $t\ge 0$ and every $x\in [\underline{x}_{\texttt{nn}}(t),\overline{x}_{\texttt{nn}}(t)]$, we have
    \begin{align*}   C_{[\underline{x}_{\texttt{nn}}(t),\overline{x}_{\texttt{nn}}(t)]}x + \underline{d}_{[\underline{x}_{\texttt{nn}}(t),\overline{x}_{\texttt{nn}}(t)]} \le \mathcal{N}(x)\le C_{[\underline{x}_{\texttt{nn}}(t),\overline{x}_{\texttt{nn}}(t)]}x + \overline{d}_{[\underline{x}_{\texttt{nn}}(t),\overline{x}_{\texttt{nn}}(t)]}
    \end{align*}
    This implies that $\mathcal{N}(x) = C_{[\underline{x}_{\texttt{nn}}(t),\overline{x}_{\texttt{nn}}(t)]}x + d(t)$, where $d(t) \in [\underline{d}_{[\underline{x}_{\texttt{nn}}(t),\overline{x}_{\texttt{nn}}(t)]},\overline{d}_{[\underline{x}_{\texttt{nn}}(t),\overline{x}_{\texttt{nn}}(t)]}]$. Now consider the following control-affine system $\dot{x} = \tilde{f}(x) + g(x) \mathcal{N}(x) + g(x) u$. 
    For $u(t) = 0$, the trajectory of the system starting from $x(0)$ is $t\mapsto x_{\texttt{nn}}(t)$. For $u(t) = v(x(t)) - \mathcal{N}(x(t))$, the trajectory of the system starting from $x(0)$ is $t\mapsto x(t)$. Letting $D^+$ denote the Dini upper right-hand derivative, we have that (see, e.g.,~\cite[Thm. 3.16]{FB-CTDS})
    \begin{align*}
        D^{+}\|x(t)-x_{\texttt{nn}}(t)\| & \le c(t) \|x(t)-x_{\texttt{nn}}(t)\| + \ell \|v(x(t)) - \mathcal{N}(x(t))\| \\
        & \leq c(t) \|x(t)-x_{\texttt{nn}}(t)\| + \ell M_{\mathcal{C}}
    \end{align*}
 for all $t \geq 0$, where we used Assumption~\ref{eq:nnclose} for the second inequality.  For notational simplicity, let $\xi(t) := \exp \left(\int_0^t c(s) d s\right)$, and note that $\xi(t) \geq 0$ for any $t \geq 0$. Then, we compute 
 \begin{align*}
 D^{+} \left(\frac{\|x(t)-x_{\texttt{nn}}(t)\|}{\xi(t)} \right) & = \frac{\left(D^{+}\|x(t)-x_{\texttt{nn}}(t)\| \right) \xi(t) - \dot{\xi}(t)\|x(t)-x_{\texttt{nn}}(t)\|} {\xi^2(t)} \\
 & \leq \frac{(c(t) \|x(t)-x_{\texttt{nn}}(t)\| + \ell M_{\mathcal{C}}) \xi(t) - c(t) \xi(t) \|x(t)-x_{\texttt{nn}}(t)\|}{\xi^2(t)} \leq \frac{\ell M_{\mathcal{C}}}{\xi(t)} \, .
\end{align*}
Next, integrating over time, we get $\frac{\|x(t)-x_{\texttt{nn}}(t)\|}{\xi(t)} \leq \frac{\|x(0)-x_{\texttt{nn}}(0)\|}{\xi(0)} + \int_0^t  \frac{\ell M_{\mathcal{C}}}{\xi(s)} d s$, where we note that $\|x(0)-x_{\texttt{nn}}(0)\| = 0$ since $x(0) = x_{\texttt{nn}}(0)$. Therefore, 
\begin{align*}
  \|x(t)-x_{\texttt{nn}}(t)\| \leq \ell M_{\mathcal{C}} \xi(t)  \int_0^t  \tfrac{1}{\xi(s)} d s = \ell M_{\mathcal{C}} \exp\left(\int_{0}^{t} c(\tau)d\tau\right) \int_{0}^{t} \exp\left(-\int_{0}^{\tau}c(s)ds\right)d\tau \, . 
\end{align*}
\end{proof}

Using this theorem, we can find over-approximation of the reachable set of~\eqref{eq:general-system-1} using the hyper-rectangular over-approximations of reachable sets of the FNN-controlled system~\eqref{eq:general-system-N} given by~\eqref{eq:interval-over}.  

\begin{theorem}[Minkoswki Sum]\label{eq:mink}
    Consider the filtered system~\eqref{eq:general-system-1} and its approximation~\eqref{eq:general-system-N} with the FNN $\mathcal{N}(x)$ satisfying Assumption~\ref{eq:nnclose}. Suppose that $t\mapsto \left[\begin{smallmatrix}
        \underline{x}_{\texttt{nn}}(t)\\ \overline{x}_{\texttt{nn}}(t)
    \end{smallmatrix}\right]$ is a curve such that $\mathcal{R}_{\texttt{nn}}(t,\mathcal{X}_0)\subseteq [\underline{x}_{\texttt{n}}(t),\overline{x}_{\texttt{nn}}(t)]$, for every $t\ge 0$, with $\mathcal{X}_0= [\underline{x}_0,\overline{x}_0]$. Assume that the reachable sets $\mathcal{R}_{\texttt{fs}}(t,\mathcal{X}_0)$ and $\mathcal{R}_{\texttt{nn}}(t,\mathcal{X}_0)$ are bounded and $\mathcal{C}\subseteq \real^n$ is a compact set such that $ \mathcal{R}_{\texttt{fs}}(t,\mathcal{X}_0)\cup  \mathcal{R}_{\texttt{nn}}(t,\mathcal{X}_0)\subseteq \mathcal{C}$ for all times $t\ge 0$. Then, 
    \begin{align}\label{eq:minkowski}
    \mathcal{R}_{\texttt{fs}}(t,\mathcal{X}_0) \subseteq  [\underline{x}_{\texttt{nn}}(t), \overline{x}_{\texttt{nn}}(t)] \oplus \mathcal{B}_{\|\cdot\|}(r_t,0),\qquad\mbox{for all } t\ge 0,
\end{align}
where $\oplus$ is the Minkowski sum and $r_t = \ell M_{\mathcal{C}} \exp\left(\int_{0}^{t} c(\tau)d\tau\right) \int_{0}^{t} \exp\left(-\int_{0}^{\tau}c(s)ds\right)d\tau$ with $\ell = \sup_{x\in \mathcal{C}} \|g(x)\|_{i}$  where $\|\cdot\|_{i}$ is the induced norm from $\|\cdot\|$ and $\|\cdot\|_{\mathcal{U}}$ and $t\mapsto c(t)$ is a curve satisfying the inequality~\eqref{eq:condition}
\end{theorem}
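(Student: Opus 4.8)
The plan is to prove the inclusion~\eqref{eq:minkowski} pointwise. Fix $t\ge 0$ and let $y\in\mathcal{R}_{\texttt{fs}}(t,\mathcal{X}_0)$ be arbitrary. By the definition of the reachable set, there exists $x_0\in\mathcal{X}_0$ with $y=\phi(t,x_0)$; that is, $y$ is the time-$t$ state of the filtered system~\eqref{eq:general-system-1} initialized at $x_0$. The idea is to compare this filtered trajectory against the FNN-controlled trajectory $\phi_{\texttt{nn}}(\cdot,x_0)$ issued from the \emph{same} initial condition, to write $y=\phi_{\texttt{nn}}(t,x_0)+\big(y-\phi_{\texttt{nn}}(t,x_0)\big)$, and to argue that the first summand lies in the hyper-rectangle $[\underline{x}_{\texttt{nn}}(t),\overline{x}_{\texttt{nn}}(t)]$ while the second lies in the ball $\mathcal{B}_{\|\cdot\|}(r_t,0)$.

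For the first summand, since $x_0\in\mathcal{X}_0$ we have $\phi_{\texttt{nn}}(t,x_0)\in\mathcal{R}_{\texttt{nn}}(t,\mathcal{X}_0)\subseteq[\underline{x}_{\texttt{nn}}(t),\overline{x}_{\texttt{nn}}(t)]$ by the hypothesis on the tube. For the second summand, I would invoke Theorem~\ref{thm:closed}. The hypothesis $\mathcal{R}_{\texttt{fs}}(t,\mathcal{X}_0)\cup\mathcal{R}_{\texttt{nn}}(t,\mathcal{X}_0)\subseteq\mathcal{C}$ for all $t\ge 0$ guarantees that both $\phi(\cdot,x_0)$ and $\phi_{\texttt{nn}}(\cdot,x_0)$ remain inside the compact set $\mathcal{C}$ for all time (indeed, $\phi(s,x_0)\in\mathcal{R}_{\texttt{fs}}(s,\mathcal{X}_0)\subseteq\mathcal{C}$ for each $s\ge 0$, and likewise for the FNN trajectory), so the hypotheses of Theorem~\ref{thm:closed} are met with $x(0)=x_{\texttt{nn}}(0)=x_0$. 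That theorem then yields $\|y-\phi_{\texttt{nn}}(t,x_0)\|=\|\phi(t,x_0)-\phi_{\texttt{nn}}(t,x_0)\|\le r_t$, with $r_t$ exactly the quantity in~\eqref{eq:minkowski}; hence $y-\phi_{\texttt{nn}}(t,x_0)\in\mathcal{B}_{\|\cdot\|}(r_t,0)$. Combining the two summands gives $y\in[\underline{x}_{\texttt{nn}}(t),\overline{x}_{\texttt{nn}}(t)]\oplus\mathcal{B}_{\|\cdot\|}(r_t,0)$, and since $y$ was arbitrary the inclusion follows.

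The point requiring the most care is that a \emph{single} radius $r_t$ must serve simultaneously for every $x_0\in\mathcal{X}_0$; otherwise one could not use one ball in the Minkowski sum. This uniformity is precisely what the common compact set $\mathcal{C}$ buys: the three ingredients of the bound in Theorem~\ref{thm:closed} — the gain $\ell=\sup_{x\in\mathcal{C}}\|g(x)\|_i$, the approximation error $M_{\mathcal{C}}$ from Assumption~\ref{eq:nnclose}, and any curve $c(t)$ satisfying~\eqref{eq:condition} — depend only on $\mathcal{C}$ and on the fixed tube $[\underline{x}_{\texttt{nn}}(t),\overline{x}_{\texttt{nn}}(t)]$, and not on the particular initial condition. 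Consequently $r_t$ is independent of $x_0$, and the pointwise estimate upgrades to the uniform inclusion. The boundedness assumption on the two reachable sets is what ensures that such a compact $\mathcal{C}$ — and hence finite $\ell$, $M_{\mathcal{C}}$, and a well-defined $c(t)$ — exists in the first place, which is why it appears as a standing hypothesis of the theorem.
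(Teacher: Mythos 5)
Your proof is correct and follows exactly the route the paper intends: the paper's own proof is a one-line sketch (``combine Theorem~\ref{thm:closed}, the hyper-rectangular bound, and the definition of Minkowski sum''), and your pointwise decomposition $y=\phi_{\texttt{nn}}(t,x_0)+\bigl(y-\phi_{\texttt{nn}}(t,x_0)\bigr)$ is precisely the detailed version of that argument. Your added observation that the compact set $\mathcal{C}$ makes the radius $r_t$ uniform over all $x_0\in\mathcal{X}_0$ is a point the paper leaves implicit, and it is handled correctly.
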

\begin{proof}
    The result follows by combining Theorem~\ref{thm:closed} and the hyper-rectangular bound~\eqref{eq:interval-over} and the definition of Minkowski sum. 
\end{proof}

Computing the Minkowski sum of two arbitrary sets can be computationally complicated. However, when the sets are ellipsoids or polytopes, there exist efficient algorithms for estimating their Minkowski sum~\citep{PG-BS:93,AH:18}. In particular, when $\mathcal{B}_{\|\cdot\|}(r_{t},0) = \mathcal{B}_{\infty}(r_t,0)$ is $\ell_\infty$-norm ball, the Minkowski sum~\eqref{eq:minkowski} can be easily computed as $[\underline{x}_{\texttt{nn}}(t),\overline{x}_{\texttt{nn}}(t)] \oplus \mathcal{B}_{\infty}(r_t,0) = [\underline{x}_{\texttt{nn}}-r_t, \overline{x}_{\texttt{nn}}+r_t]$. 

\vspace{-0.3cm}
\section{Numerical Experiments}

We show the efficiency of our proposed reachability method for performance verification of CBF-based controllers with two sets of numerical experiments.

\vspace{.1cm}

\noindent \textbf{Scenario~1} \emph{(Integrator dynamics with circular obstacle)}. Consider the integrator dynamics $\dot{x} = u$,  where $x \in \mathbb{R}^2$ represents the state vector and $u \in \mathbb{R}^2$ is the control input. The nominal controller is designed as $\kappa(x) = K x$, with $K = \begin{bmatrix} -1 & 0 \\ 0 & -5 \end{bmatrix}$, and it stabilizes the origin.  We consider a circular obstacle located at $\boldsymbol{o} = [2, 0]^\top$ with a radius of $r = 1$, and define the safe set $\mathcal{S} = \{ x \in \mathbb{R}^2 \mid h_1(x) \geq 0 \}$, with CBF $h_1(x) = \| x - \boldsymbol{o} \|^2 - r^2$. By analyzing the closed-loop dynamics with the CBF filter, from~\cite{chen2024equilibria} it follows that  additional equilibria emerge at $x_{\texttt{unde}}^* = \left( \frac{5}{2}, \pm \frac{\sqrt{3}}{2} \right)^\top$ and $x_{\texttt{unde}}^* = [3, 0]^\top$. Evaluating the Jacobian $J_{h_1,\alpha}(x_{\texttt{unde}}^*)$ at these points, it follows that that the equilibrium at $[3, 0]^\top$ is locally asymptotically stable, while the other two are saddle points. To  better highlight the stability of the undesirable equilibrium, we introduce the modified CBF (a scaled, equivalent version of $h_1$) $h_2(x) = \left( \| x - \boldsymbol{o}_2 \|^2 + 1 \right) h_1(x)$, where $\boldsymbol{o}_2 = [5, 1]^\top$~\cite[Sec.~7]{chen2024equilibria}, and we use $\alpha(s) = s$. The safety filter computes the control input $v(x)$ by using $h_2(x)$ and without additional constraints. We train a FNN offline to approximate the mapping $x \mapsto v(x)$. The neural network is structured as an FNN with architecture [2 $\times$ 400 $\times$ 400 $\times$ 400 $\times$ 2, ReLU activations], resulting in the closed-loop approximate system $\dot{x}_{\texttt{nn}} = \kappa(x_{\texttt{nn}}) + \mathcal{N}(x_{\texttt{nn}})$, as per~\eqref{eq:general-system-N}. The network was trained using 99,472 data points over the region illustrated in the Figure~\ref{fig:scenario1}, with points that are equi-spaced on a grid. The training achieved 
a maximum $\ell_2$-norm error of $0.2$ and a maximum $\ell_\infty$-norm  error of $0.19$. We simulate the system dynamics using Euler integration with a step size of $0.04$.

 \begin{figure}[t!]
  \centering
  \includegraphics[width=14.0cm]{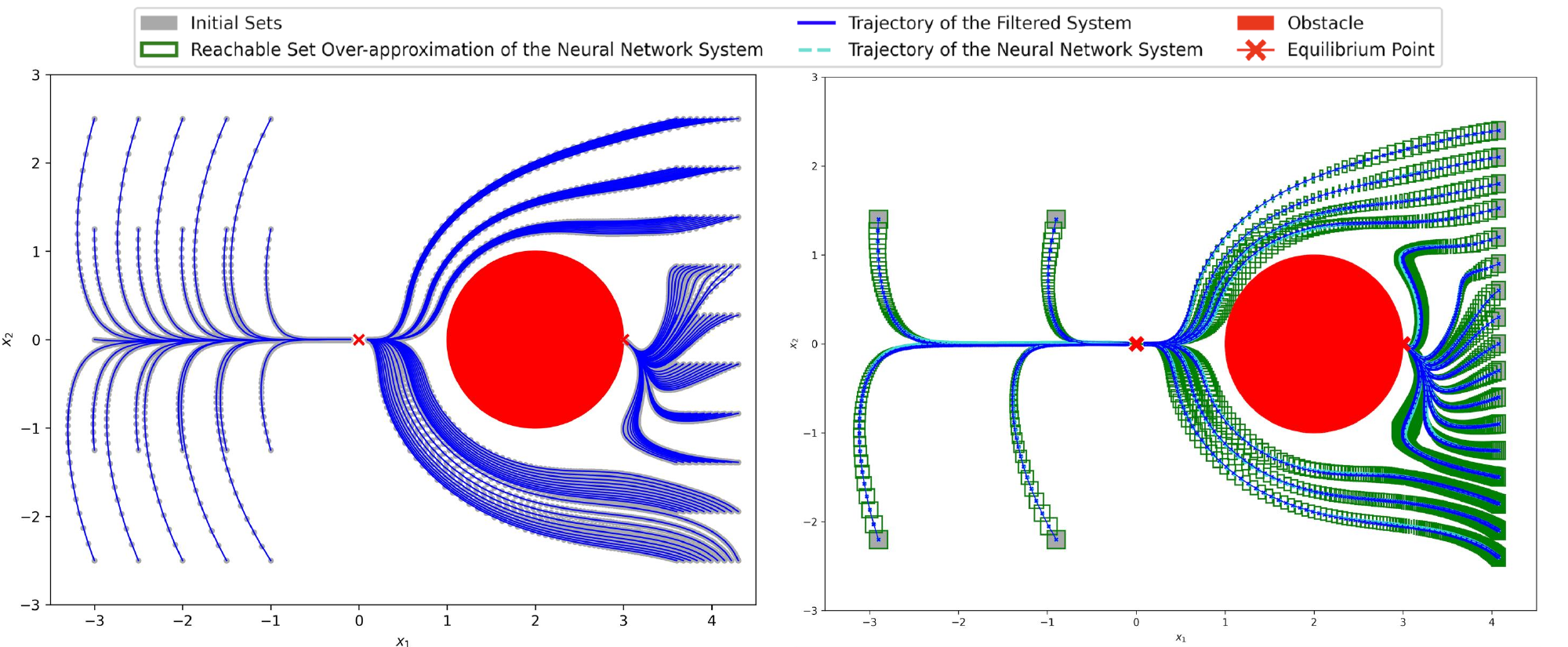}
  \vspace{-.2cm}
  \caption{(\textbf{Left}) Trajectories of~\eqref{eq:general-system-1} in Scenario 1.  (\textbf{Right}) System~\eqref{eq:general-system-N} with the FNN approximation, and hyper-rectangles $[\underline{x}_{\texttt{nn}}(t), \overline{x}_{\texttt{nn}}(t)]$ starting from several initial sets $[\underline{x}_{\texttt{nn}}(0), \overline{x}_{\texttt{nn}}(0)]$.   
  }
  \label{fig:scenario1}
  \vspace{-.4cm}
\end{figure}

\begin{figure}[t!]
  \centering
  \includegraphics[width=14.0cm]{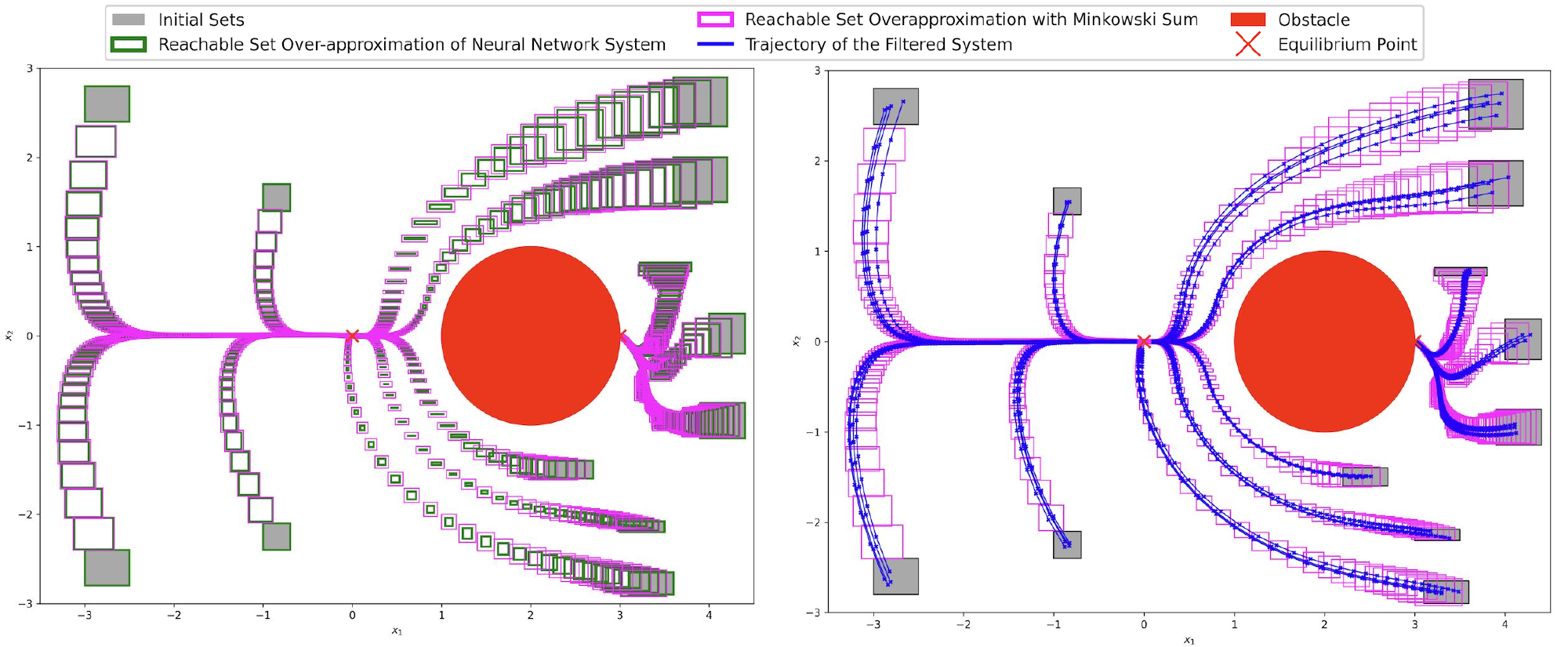}
  \vspace{-.2cm}
  \caption{(\textbf{Left}) Over-approximations  $[\underline{x}_{\texttt{nn}}(t), \overline{x}_{\texttt{nn}}(t)]$ and $[\underline{x}_{\texttt{nn}}(t), \overline{x}_{\texttt{nn}}(t)] \oplus \mathcal{B}(r_t,0)$ of the reachable sets, for different initial sets, over $T = 10$ seconds. (\textbf{Right}) As expected, trajectories of the filtered system ~\eqref{eq:general-system-1} are contained in  
  $[\underline{x}_{\texttt{nn}}(t), \overline{x}_{\texttt{nn}}(t)] \oplus \mathcal{B}_{\infty}(r_t,0)$.}
  \label{fig:scenario12}
  \vspace{-.6cm}
\end{figure}

Figure~\ref{fig:scenario1} (left) shows the trajectories of the filtered system for several initial points, over a time horizon of $T = 10$ seconds. Several trajectories converge to the undesirable equilibrium at $[3, 0]^\top$, located on the obstacle boundary. Figure~\ref{fig:scenario1} (right) shows trajectories $x_{\texttt{nn}}(t)$ of the system with the FNN approximation, along with the hyper-rectangles $[\underline{x}_{\texttt{nn}}(t_k), \overline{x}_{\texttt{nn}}(t_k)]$, for several times $t_k$ and for several sets of initial states $\mathcal{X}_0 = [\underline{x}_0, \overline{x}_0]$ (gray rectangles), over a time horizon of $T = 10$ seconds. The affine inclusion functions for the FNN are obtained from CROWN and computed using autoLiRPA, and we construct the embedding system as described in Section~\ref{sec:reachability}. Figure~\ref{fig:scenario12} (left) illustrates the sets $[\underline{x}_{\texttt{nn}}(t), \overline{x}_{\texttt{nn}}(t)]$ and $[\underline{x}_{\texttt{nn}}(t), \overline{x}_{\texttt{nn}}(t)] \oplus \mathcal{B}_{\infty}(r_t,0)$, constructed as described in Section~\ref{sec:reachability}, for several initial sets $\mathcal{X}_0 = [\underline{x}_0, \overline{x}_0]$; here, $\mathcal{B}_{\infty}(r_t, 0)$ is computed with respect to $\ell_\infty$-norm using Theorem~\ref{thm:closed} with $\mathcal{C}= [-3.5,4.5]\times [-3,3]$ and $M_{\mathcal{C}} = 0.19$. Figure~\ref{fig:scenario12} (left) shows that trajectories fo the filtered system~\eqref{eq:general-system-1} are contained in the sets  $[\underline{x}_{\texttt{nn}}(t), \overline{x}_{\texttt{nn}}(t)] \oplus \mathcal{B}_{\infty}(r_t,0)$. For the initial sets shown in Figure~\ref{fig:scenario12}, our reachability framework efficiently and rigorously verifies whether the system converges to the origin or a spurious equilibrium point. The time to compute $[\underline{x}_{\texttt{nn}}(t), \overline{x}_{\texttt{nn}}(t)]$ in  Figure~\ref{fig:scenario12} from the 12 initial sets across all was approximately 2.15 seconds, and the time to compute the Minkowski sum was approximately 0.02 sec.

\vspace{.1cm}

\noindent \textbf{Scenario~2} \emph{(Double integrator dynamics with multiple obstacles)}. Next, we extend the experimental results to  multiple obstacles. We consider a double integrator system with position $x_1$, velocity $x_2$, and an acceleration control input $u \in \mathbb{R}$. The system dynamics are expressed as $\dot{x}_1 = x_2$ and $\dot{x}_2 = u$, and the nominal control input is designed as $\kappa(x) = - k^\top x$, with $k = [1, 2]^\top$, stabilizing the closed-loop system to the origin. We define two CBF constraints, associated with two unsafe regions.
The first constraint prevents collisions with a circular obstacle located at $\boldsymbol{o} = [2, 0]^\top$ with radius $r_1 = 1$. The corresponding safe set is defined as: $\mathcal{S}_1 = \{x \in \mathbb{R}^2 \mid h_1(x) \geq 0\}, \quad h_1(x) = (x_1 - 2)^2 + x_2^2 - r_1^2$. The second constraint ensures that the system remains in the half-plane $x_2 \geq \bar{x}_2 = -2$; the corresponding safe set is given by: $\mathcal{S}_2 = \{x \in \mathbb{R}^2 \mid h_2(x) \geq 0\}, \quad h_2(x) = x_2 - \bar{x}_2$. These constraints ensure that the overall safe set $\mathcal{S} = \bigcap_{i=1}^2 \mathcal{S}_i$ remains forward invariant. To approximate the optimization-based filter $v(x)$, we use 91,326 data points uniformly distributed across a grid within the region shown in Figure~\ref{fig:scenario2} to train a FNN with the architecture [ [2 $\times$ 500 $\times$ 500 $\times$ 500 $\times$ 1], ReLU activations]. The network achieved a maximum $\ell_\infty$-norm error of $0.26$. The embedding system~\eqref{eq:embeding}  is computed using FNN inclusion functions from CROWN and simulated with Euler integration (step size 0.05, simulation time $T = 10$ seconds). We use Theorem~\ref{eq:mink} where $\mathcal{B}_{\infty}(r_t,0)$ is computed with respect to $\ell_\infty$-norm with $\mathcal{C} = [-4,6]\times [-2,2]$ and $M_{\mathcal{C}} = 0.26$. In this scenario, $v(x)$ cannot be computed in closed form, further motivating our approach. For the 9 initial sets shown in Figure~\ref{fig:scenario2}, our reachability framework rigorously verifies that the system converges to the origin. The time to compute $[\underline{x}_{\texttt{nn}}(t), \overline{x}_{\texttt{nn}}(t)]$ from the 9 initial sets was approximately 2.7 sec. on average, and the Minkowski sum computation took approximately 0.015 sec. 

\begin{figure}
  \centering
  \includegraphics[width=14.0cm]{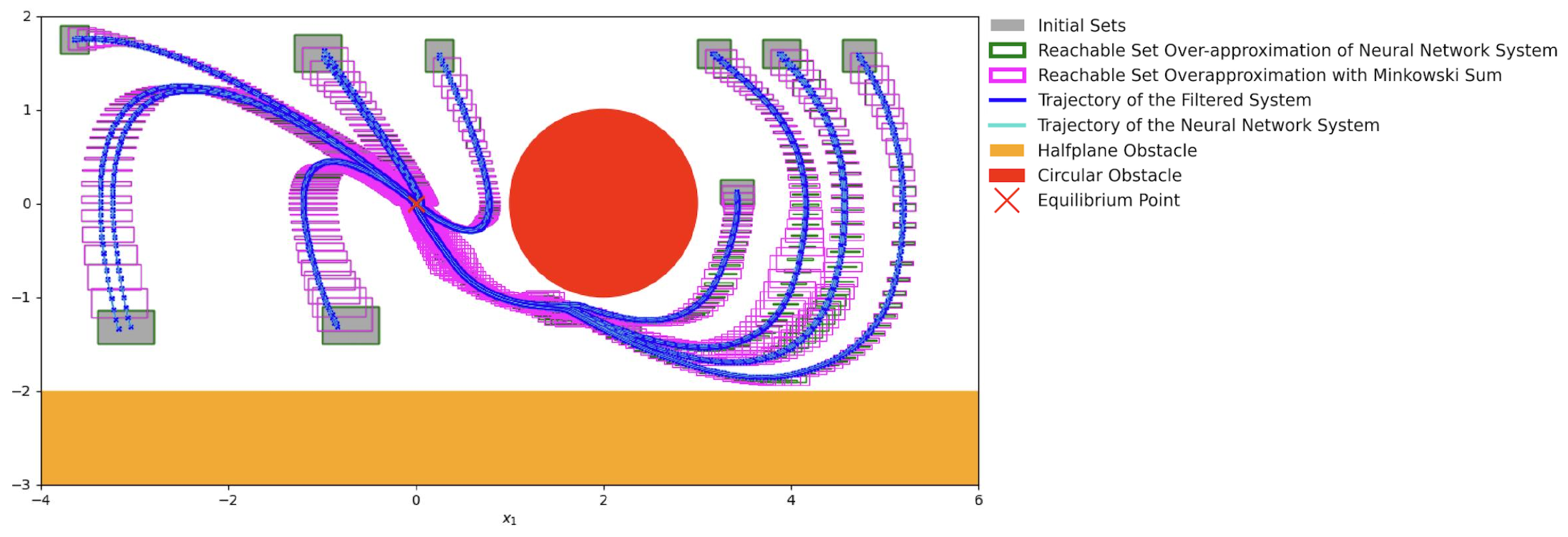}
  \vspace{-.3cm}
  \caption{Trajectories of~\eqref{eq:general-system-1} in Scenario 2, hyper-rectangles  $[\underline{x}_{\texttt{nn}}(t), \overline{x}_{\texttt{nn}}(t)]$, and over-approximations $[\underline{x}_{\texttt{nn}}(t), \overline{x}_{\texttt{nn}}(t)] \oplus \mathcal{B}_{\infty}(r_t,0)$ of reachable sets, for different initial sets.
  }
  \label{fig:scenario2}
  \vspace{-.6cm}
\end{figure}

\vspace{-0.4cm}

\section{Conclusions}

We proposed a computationally efficient interval reachability method for performance verification of systems with optimization-based controllers. Our method builds on an approximation of the optimization-based controller by a pre-trained FNN, the use of an embedding system, and bounds for the distance of solutions of the system with the optimization-based controller and the FNN approximation.  For future research, we plan to extend the proposed approach to systems with disturbance, and investigate reachability-based control design approaches.

\acks{This work was supported in part by the National Science Foundation award 2448264, by the Air Force Office of Scientific Research award FA9550-23-1-0740, and by the U.S. Department of Energy, Office of Electricity, Advanced Grid Modeling program.}

\bibliography{biblio.bib}

\end{document}